\renewcommand\footnotetextcopyrightpermission[1]{} 
\begin{document}

\title{Crossing Patterns in Nonplanar Road Networks}

\author{David Eppstein}
\affiliation{%
  \institution{Dept. of Computer Science}
  \streetaddress{Univ. of California, Irvine}
  \city{Irvine} 
  \state{CA} 
  \postcode{92697}
}
\email{eppstein@uci.edu}

\author{Siddharth Gupta}
\affiliation{%
  \institution{Dept. of Computer Science}
  \streetaddress{Univ. of California, Irvine}
  \city{Irvine} 
  \state{CA} 
  \postcode{92697}
}
\email{guptasid@uci.edu}

\begin{abstract}
We define the \emph{crossing graph} of a given embedded graph (such as a road network) to be a graph with a vertex for each edge of the embedding, with two crossing graph vertices adjacent when the corresponding two edges of the embedding cross each other.
In this paper, we study the sparsity properties of crossing graphs of real-world road networks.
We show that, in large road networks (the Urban Road Network Dataset), the crossing graphs have connected components that are primarily trees, and that the  remaining non-tree components are typically sparse (technically, that they have bounded degeneracy). We prove theoretically that when an embedded graph has a sparse crossing graph, it has other desirable properties that lead to fast algorithms for shortest paths and other algorithms important in geographic information systems. Notably, these graphs have \emph{polynomial expansion}, meaning that they and all their subgraphs have small separators.
\end{abstract}

\begin{CCSXML}
<ccs2012>
<concept>
<concept_id>10002951.10003227.10003236.10003237</concept_id>
<concept_desc>Information systems~Geographic information systems</concept_desc>
<concept_significance>500</concept_significance>
</concept>
</ccs2012>
\end{CCSXML}

\ccsdesc[500]{Information systems~Geographic information systems}

\keywords{road network; crossings; nonplanar graphs; sparsity}

\maketitle


\section{Introduction}
Road networks are often modeled graph-theoretically, by placing a graph vertex at each intersection or terminus of roads, and connecting vertices by edges that represent each segment of road between two vertices. Thus, each vertex is naturally associated with a coordinate on the earth's surface.

Much past work on algorithms for road networks has either assumed that these networks are planar (that is, that no two roads cross without forming an intersection at their crossing point) or has added artificial intersection points to roads that cross without intersection, to force these networks to be planar. Planar graphs have  many convenient  properties, including planar graph duality and planar graph separator theorems~\cite{LipTar-SJAM-79,Chu-PFV-90} that allow natural and important problems on these networks to be solved more quickly. For instance, for planar graphs, it is known how to compute shortest paths in linear time, based on the planar separator theorem~\cite{KleRaoRau-STOC-94}, in contrast to the situation for general graphs where shortest paths are slower by a logarithmic factor.
Unfortunately, as Eppstein et al.~\cite{eppstein2008studying} observed, the available data for real-world road networks shows that these networks are not actually planar: they include many crossings.
This discovery naturally raises the question of how to model nonplanar road networks, in a way that allows efficient algorithms to be based on their properties.

In this context, one would like a model of road networks that is  in some sense near-planar (after all, road networks have few points where roads cross without intersecting, although their number is not zero), that is realistic (accurately modeling real-world road networks), and that is useful (leading to efficient algorithms).

One clear property of road networks is that they are \emph{sparse}: the number of road segments exceeds the number of road intersections by only a small factor, half of the average number of segments that meet at an intersection. Since the vast majority of intersections are the meeting point of three or four road segments, this means that the number of road segments should be between 1.5 and 2 times the number of intersections. Researchers in graph algorithms and graph theory have developed a sophisticated hierarchy of classifications of sparse graph families centered around the intuitive notion of sparseness. Many of these types of sparseness imply general algorithmic meta-theorems about the properties that can be computed efficiently for graphs in the given family.
In particular, many of the known algorithms for planar graphs can be extended to the class of graphs of \emph{polynomial expansion}, a property that was originally defined using graph minor theory but that has a more natural equivalent definition (for classes of graphs closed under taking subgraphs) in terms of the existence of sublinear-size separators~\cite{NevOss-Sparsity-12,DvoNor-SSS-15}. Graphs of polynomial expansion support efficient separator-based divide-and-conquer algorithms, as well as more sophisticated pattern matching algorithms based on their graph minor properties.
We would like to show that road networks, too, have small separators, and therefore that they can support all of these algorithms.

In this paper, we provide a mathematical model of non-planar road networks in terms of the sparseness of their \emph{crossing graphs}, graphs representing pairs of road segments that cross in the network. We analyze the Urban Road Network Data set and show that, indeed, it is a good fit for the model. Additionally, we prove that networks within this model have polynomial expansion, from which it follows that the linear-time planar shortest path algorithm of Klein et al~\cite{KleRaoRau-STOC-94} can be adapted to work on these networks, despite their non-planarities.

\section{Past work}
\subsection{Nonplanar road networks}

The past work by Eppstein et al.~\cite{eppstein2008studying,eppstein2009going,eppstein2009linear} has attempted to model nonplanarities in planar road networks in two different ways.
In \cite{eppstein2008studying} the authors posited that road networks are subgraphs of the intersection graphs of systems of disks (the disks centered at each intersection of roads with radius equal to half the length of the longest segment of roads meeting at that intersection) and that, with a small number of exceptional high-radius disks, these disks have low \emph{ply} (at most a constant number of non-exceptional disks cover any point of the Earth's surface). They performed an empirical analysis of road network data showing that this model fits actual road networks reasonably well, and they used the assumption to develop efficient road network algorithms. Unfortunately, the class of networks defined by this model does not fit well into the theory of  sparse networks, because of its handling of the exceptional disks. Because these disks could be arbitrary, they could in principle heavily overlap each other, producing dense subgraphs that prevent the graphs defined in this way from having polynomial expansion or other good sparseness properties. This misbehavior seems unlikely to happen in actual road networks, but this mismatch between theory (where these graphs can have dense subgraphs) and practice (where dense subgraphs are unlikely) indicates that it should be possible to replace their model of road networks by another model that more accurately matches the sparsity properties of real-world road networks.

Later work of the same authors~\cite{eppstein2009going} attempted to justify the low number of crossings in road networks by showing that randomly chosen lines (modeling, for instance, a highway cutting across an older city grid) typically have a sublinear number of crossings with other roads.
Another paper~\cite{eppstein2009linear} used this observation of few crossings per line as the basis for a very weak assumption about road networks, that the total number of crossings is smaller than the number of intersections by a sufficiently large non-constant factor. This assumption allows some algorithms to be performed efficiently; notably, the crossings themselves can all be found in linear time. However, it is not strong enough to imply the existence of small graph separators for arbitrary subgraphs of road networks, a property that is required by fast separator-based graph algorithms. Additionally, this paper's assumptions about the sparsity of crossings are on dubious ground from an empirical point of view: what reason is there to believe that the ratio of intersections to crossings in large street networks is non-constant rather than a large constant?

\subsection{Nearly-planar graphs}
\label{sec:past-nearly-planar}

The graph theory literature includes many natural generalizations of planar graphs that we may choose from. Among these, the $k$-apex graphs have been defined as the graphs that can be made planar by the removal of $k$ vertices~\cite{Kaw-FOCS-09}; however, in road networks, the number of vertices that would need to be removed would typically be proportional to the number of crossings, a large enough number that it is not reasonable to treat it as a constant. Similarly, the $k$-genus graphs are the graphs that can be embedded without crossings into a surface of genus at most $k$~\cite{DemFomHaj-JACM-05}. A road network can be embedded without crossings on a surface with a handle for each overpass or tunnel, and this surface would have genus proportional to the number of overpasses and tunnels, but again this number would be too large to treat as a constant.
These two graph families form minor-free graph families: families of graphs that, like the planar graphs, have some forbidden graphs that cannot be formed from contractions or deletions of graphs in their family. However, there is no reason to expect any particular forbidden minors in road networks.

\begin{figure}[t]
\centering
\includegraphics[width=0.65\columnwidth]{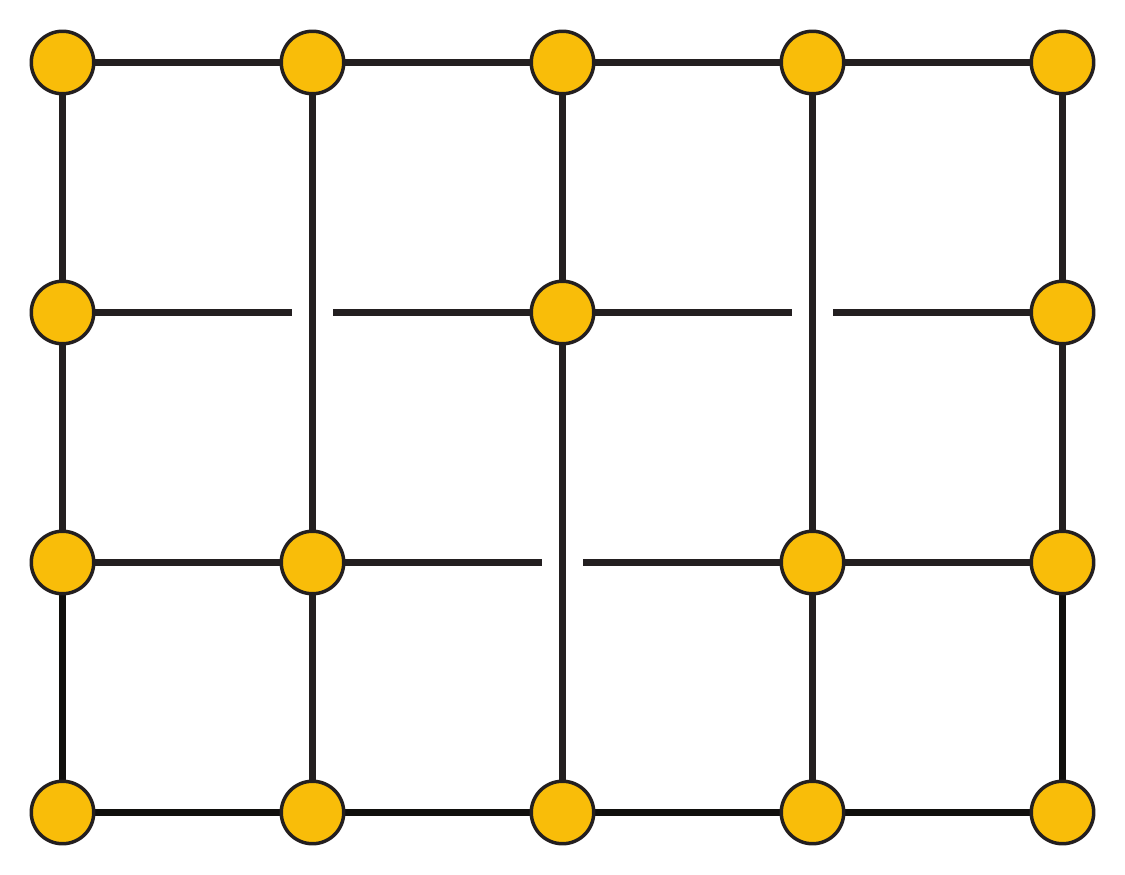}
\caption{A $1$-planar graph: each edge is crossed at most once. Although more general than planar graphs, this class of graphs does not adequately model real-world road networks.}
\label{fig:1planar}
\end{figure}

Among the many generalizations of planar graphs that have been studied in the graph theory literature, another one seems more promising as a model for road networks: the $1$-planar graphs~\cite{Rin-AMSUH-65} or more generally $k$-planar graphs~\cite{PacTot-Combinatorica-97,GriBod-Algorithmica-07,dujmovic2015genus}. A $1$-planar graph is a graph in which every road segment has at most one crossing (Figure~\ref{fig:1planar}). More generally a $k$-planar graph is a graph in which every road segment has at most $k$ crossings.  Many of the sparsity properties of these graphs follow directly from planarization: if one replaces each crossing with a vertex, one obtains a planar graph in which the number of vertices has been blown up only by a factor of $O(k)$. Based on this principle,
it is known that these form sparse families of graphs, and in particular they obey a separator theorem like that for planar graphs but with a dependence on $k$ as well as on the number of vertices in the size of the separator~\cite{dujmovic2015genus}.
Although $k$-planar graphs are NP-hard to recognize from their graph structure alone~\cite{GriBod-Algorithmica-07,BanCabEpp-WADS-13}, that is not problematic for their application to road networks, because in this case an embedding with few crossings would already be known: the actual embedding of the roads on the surface of the earth.

Therefore, it is tempting to model road networks as 1-planar or $k$-planar graphs. However, the restriction on the number of crossings per edge may be too restrictive to model real-world road network graphs. As we show, the assumption that road networks are $1$-planar does not fit the actual data, because  real-world road network data includes road segments that have many crossings. In particular, a long segment of highway may have many crossings between interchanges. However, despite the poor fit of this model to the data, we may take inspiration from $1$-planar graphs in finding a more general class of graphs with few crossings per edge in some more general sense, that still maintains the other desirable properties of this graph class and that allows the algorithms from the theory of $1$-planar graphs to be applied to road networks.

\section{Overview of new results}
\subsection{The crossing graph}

The main new idea of this paper is to study crossings in road networks by introducing a new auxiliary graph, the \emph{crossing graph} of the road network.

\begin{definition}
\normalfont
We define the \emph{crossing graph} of an embedded graph $G$ to be an undirected graph, different from $G$ itself.
Each edge of $G$ becomes a vertex in the crossing graph.
When two edges of $G$ cross each other.
we connect the two corresponding vertices of the crossing graph (the ones representing the two segments) by an edge representing the crossing.
\end{definition}

This concept is illustrated in Figure~\ref{fig:crossing-graph}.
In particular, for a road network, each vertex of the crossing graph represents a segment of road, and each edge of the crossing graph represents two road segments that cross without meeting at an intersection.

\begin{figure}[t]
\centering
\includegraphics[width=0.65\columnwidth]{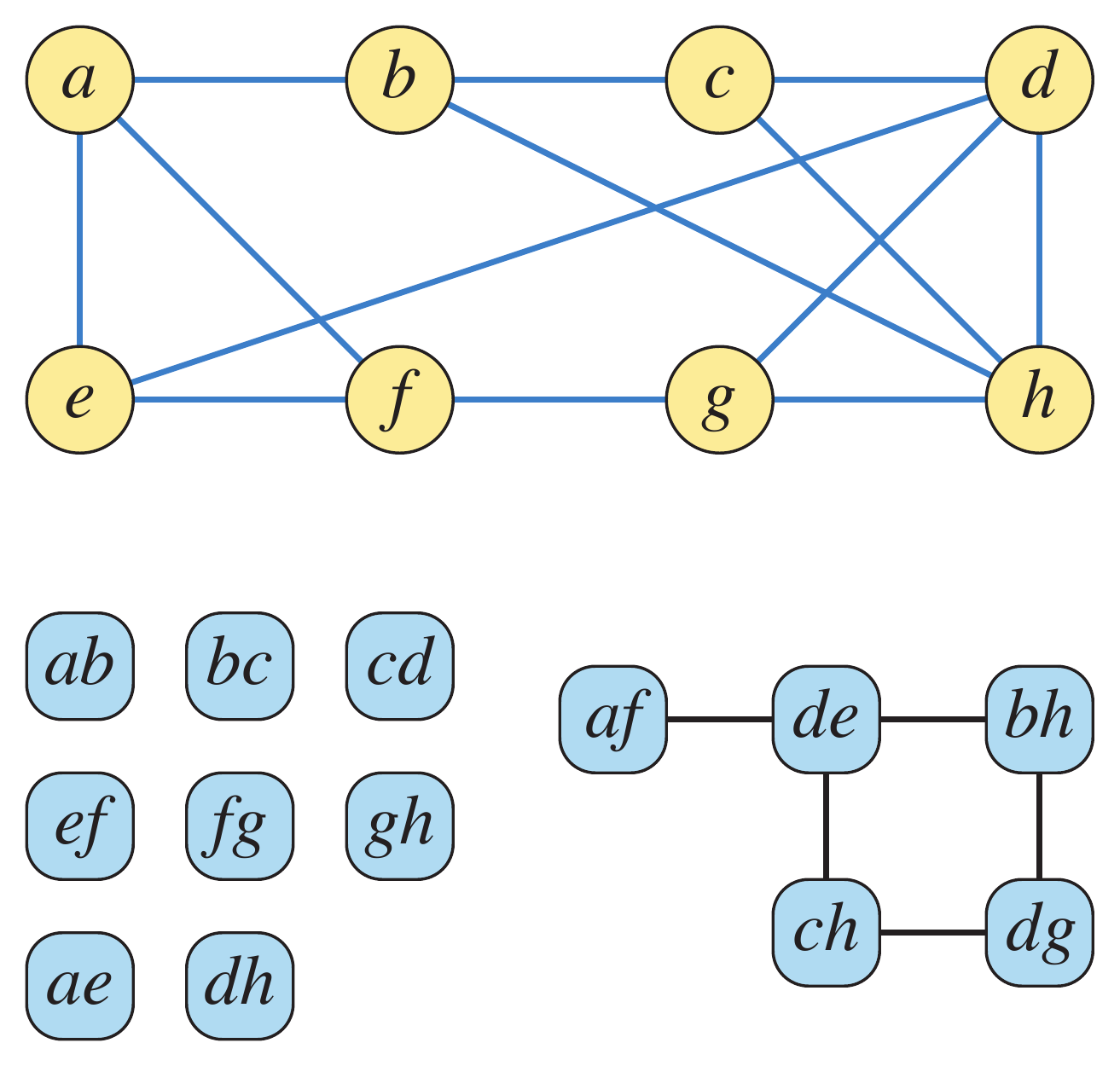}
\caption{A drawing of a graph with crossings (top) and its crossing graph (bottom). The eight isolated vertices of the crossing graph correspond to the eight uncrossed edges in the upper graph.}
\label{fig:crossing-graph}
\end{figure}

This structure allows many natural properties of the underlying road network can be read off directly from the crossing graph. For instance, a segment of road is uncrossed if it corresponds to an isolated vertex (one without any incident edges) in the crossing graph.
A road network is crossing-free (planar) if and only if all its segments are uncrossed, if and only if the vertices of the crossing graph are all isolated, if and only if the crossing graph itself is an independent set. 

For a more complex example, a road network can be modeled as a $1$-planar graph (each edge has at most one crossing) if and only if the crossing graph has maximum degree one; that is, if its crossing graph is a matching. Similarly, the road network is $k$-planar if its crossing graph has maximum degree~$k$.

Our hypothesis is that the crossing graph is a sparse graph (although possibly not one with constant maximum degree) and that its structure can be used to investigate the graph-theoretic structure of the road network itself. We study this question both empirically (by computing and examining the structure of crossing graphs for actual large-scale road networks) and theoretically (by proving that certain types of sparsity in the crossing graph imply the existence of small separators and efficient algorithms for the underlying road networks).

\subsection{Empirical experiments}

We investigate empirically the graph structure of the crossing graph, by constructing this graph for the 80 of the most populated urban areas in the world given by the Urban Road Network Dataset~\cite{urban}.


Our investigations show that  the degree of the crossing graph is small, but not necessarily small enough to consider to be a constant: we found vertices of degree up to 166 (road segments with up to 166 other segments crossing them, none of the crossings forming an intersection). However, we found that the crossing graphs tend to have much smaller \emph{degeneracy}, a number $d$ such that every graph in a given family of graphs (closed under taking subgraphs) has at least one vertex of degree at most~$d$. For instance, trees are exactly the connected graphs of degeneracy one, and our work found that many of the connected components of the crossing graph are trees. More generally, the maximum degeneracy that we found in the crossing graph of any road network was~$6$.

\subsection{Theory of networks with\\ sparse crossing graphs}
Based on our empirical investigations, we undertook a theoretical study of the networks whose crossing graphs have bounded degeneracy. We prove theoretical results showing that these networks are closed under subgraphs and that (like the $k$-planar graphs) they always have small separators. Therefore, they form a new family of graphs of polynomial expansion.

\section{Preliminaries}

\begin{figure*}[t]
\centering
\includegraphics[width=0.9\textwidth]{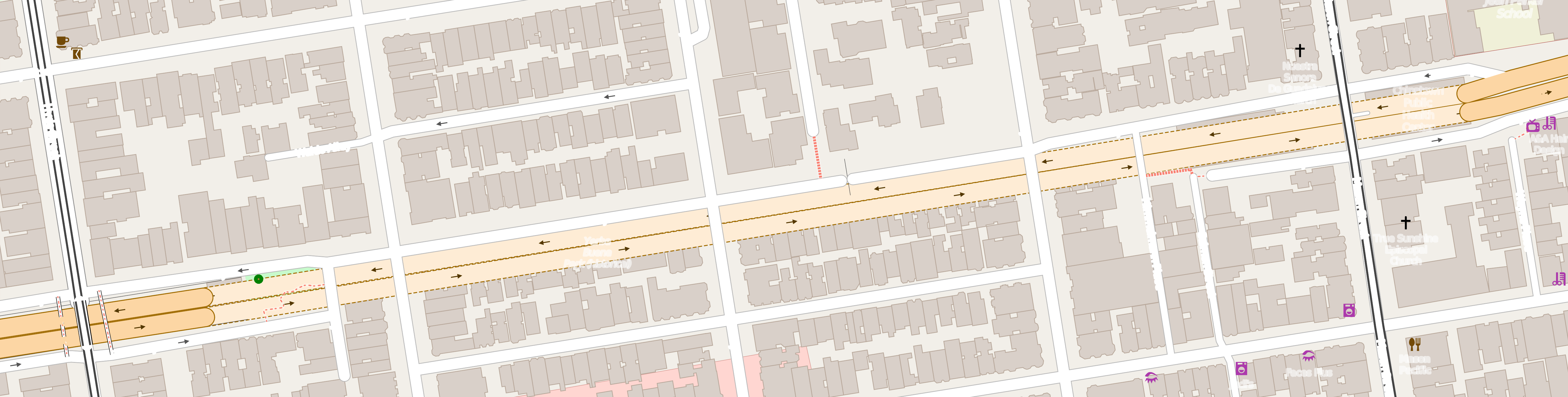}
\caption{The Robert C. Levy tunnel in San Francisco, in which Broadway (orange) passes under seven other streets (L--R: Hyde, Cyrus Pl., Leavenworth, Jones, Taylor, Himmelmann Pl., and Mason) without intersecting them. CC-BY-SA image from OpenStreetMap.}
\label{fig:BroadwayTunnel}
\end{figure*}

Before detailing our experimental and theoretical results, we provide some necessary definitions.

\subsection{Sparse graph properties}
\label{sec:sparsedef}

All graphs in this paper finite. Although road networks are typically directed (by the direction of traffic on one-way streets and divided highways), the direction of the edges does not matter for the crossing pattern and separator properties considered here. Therefore, we treat these graphs as undirected.

\begin{definition}
\normalfont
The \emph{degree} of a vertex in a graph is the number of edges touching that vertex. The \emph{minimum degree} $\delta(G)$ and \emph{maximum degree} $\Delta(G)$ of a graph $G$ are the minimum and maximum, respectively, of the degrees of the vertices in $G$.
A family $\mathscr{F}$ of graphs has \emph{bounded degree} if all graphs in $\mathscr{F}$ have maximum degree $O(1)$; that is, if there is an upper bound on the maximum degree that may depend on $\mathscr{F}$ itself but that does not depend on the choice of a graph within $\mathscr{F}$.
\end{definition}

\begin{definition}
\normalfont
The \emph{degeneracy} of a graph $G$ is the maximum, over subgraphs of $G$, of the minimum degree of the subgraph.
\end{definition}

A concept equivalent to degeneracy (but differing from it by one) was originally called the coloring number by Erd{\H{o}s and Hajnal~\cite{ErdHaj-AMH-66}.
For instance, the graphs of degeneracy one are exactly the forests. A graph has degeneracy at most $d$ if and only if its vertices can be ordered in such a way that every vertex has at most $d$ later neighbors. For, given such an ordering, every subgraph of $G$ has a vertex of degree at most $d$, namely the first vertex of the subgraph to appear in the ordering. Given a graph of degeneracy $d$, an ordering with this property can be found by greedily removing the minimum degree vertex from each remaining subgraph. This greedy removal process can be performed in linear time and allows the degeneracy to be computed in linear time~\cite{MatBec-JACM-83}.

\begin{definition}
As with degree, we define a family $\mathscr{F}$ of graphs to have \emph{bounded degeneracy} if all graphs in $\mathscr{F}$ have degeneracy $O(1)$.
\end{definition}

That is, family $\mathscr{F}$ has bounded degeneracy if there is an upper bound on the degeneracy of the graphs in $\mathscr{F}$ that may depend on $\mathscr{F}$ itself but that does not depend on the choice of a graph within $\mathscr{F}$.

\begin{definition}
\normalfont
The \emph{hop count} from a vertex $u$ to vertex $v$ in a graph is the minimum number of edges between them. The \emph{radius} of a graph or subgraph is the smallest number $r$ such that there exists a vertex within hop count of $r$ of all other vertices.
An $r$-\emph{shallow minor} of a graph $G$ is a graph obtained from $G$ by possibly deleting some edges and/or vertices of $G$, and then contracting some radius-$r$ subgraphs of the remaining graph into supervertices.
A family $\mathscr{F}$ of graphs has \emph{bounded expansion} if, for all choices of the parameter $r$,
the $r$-shallow minors of the graphs in $\mathscr{F}$ have bounded edge/vertex ratio.
More strongly, family $\mathscr{F}$ has \emph{polynomial expansion} if this edge/vertex ratio is bounded by a polynomial in $r$.
\end{definition}

In particular, for $r=0$ the shallow minors are just the subgraphs, so a family of graphs with bounded expansion or polynomial expansion must have subgraphs with bounded edge/vertex ratio. This implies that they necessarily also have bounded degeneracy. The graphs of polynomial expansion include the $k$-apex graphs, $k$-genus graphs, and $k$-planar graphs~\cite{NevOss-Sparsity-12}, described in our earlier discussion of near-planar families of graphs (Section~\ref{sec:past-nearly-planar}).

Graph families of polynomial expansion can also be characterized in terms of separators, small sets of vertices that partition the graph and form the basis of many divide-and-conquer graph algorithms.

\begin{definition}
\normalfont
For an $n$-vertex graph $G$ and a constant $c<1$ we define a $c$-separator to be a subset $S$ of vertices of $G$ such that every connected component of $G\setminus S$ (the subgraph formed by deleting $S$ from $G$) has at most $cn$ vertices. We say that a family $\mathscr{F}$ of graphs has \emph{sublinear separators} if there exist constants $c$, $d$, and $e$, with $c<1$ and $e<1$, such that every $n$-vertex graph in $\mathscr{F}$ has a $c$-separator of size (number of vertices) at most $dn^e$.
\end{definition}

For instance, the famous planar separator theorem states that planar graphs have $2/3$-separators of size $O(\sqrt n)$, so we can take $c=2/3$, $d=O(1)$, and $e=1/2$~\cite{LipTar-SJAM-79}. A \emph{separator hierarchy} is formed by taking separators recursively, until all remaining components have size $O(1)$; for planar graphs it can be constructed in linear time~\cite{Goo-JCSS-95}, and enables linear-time computation of shortest paths~\cite{KleRaoRau-STOC-94}, among other problems.

Then, for a family $\mathscr{F}$ of graphs that is closed under taking subgraphs (every subgraph of a graph in $\mathscr{F}$ is also in $\mathscr{F}$), $\mathscr{F}$ has polynomial expansion if and only if $\mathscr{F}$ has sublinear separators~\cite{DvoNor-SSS-15}. The graphs of polynomial expansion have other important algorithmic properties, not directly deriving from their separators; for instance, for every fixed pattern graph $H$ it is possible to test whether $H$ is a subgraph of a graph in a family of bounded expansion (the subgraph isomorphism problem) in linear time. More generally, one can test any property that can be formulated in the first-order logic of graphs, for members of a family of graphs of bounded expansion, in linear time~\cite{NevOss-Sparsity-12}.

\subsection{Classification of nonplanarities}

We distinguish between two kinds of nonplanarity in a road network.

\begin{definition}
\normalfont
An \emph{embedding} of a graph is a mapping from its vertices to points and its edges to curves,
such that the vertices at the ends of each edge are mapped to the points at the ends of the corresponding curve. A \emph{crossing} is a point where two edge curves intersect that is not a common endpoint of both curves.
A \emph{removable crossing} is a crossing between two edges in an embedding of a graph that
can be removed (without introducing other crossings) by making only local reroutings in the embedding.
\end{definition}

Such a crossing may occur, for instance, when an actual segment of road follows a curved path, passing between segments of other roads without crossing them, but the network segment representing it follows a different straight path that crosses nearby road segments. In such a case, re-routing the segment to follow the actual path of the road will cause these crossings to go away. 

\begin{definition}
\normalfont
An \emph{essential crossing} is a crossing between two edges in an embedding of a graph
that represent disjoint (non-intersecting) road segments and that cannot be removed by local changes.
\end{definition}

Such a crossing may be caused, for instance, when one road follows a tunnel that crosses under several other roads (Figure~\ref{fig:BroadwayTunnel}), or by the multiple crossing segments of a highway interchange (Figure~\ref{fig:HighFive}).

\begin{figure}[t]
\centering
\includegraphics[width=0.75\columnwidth]{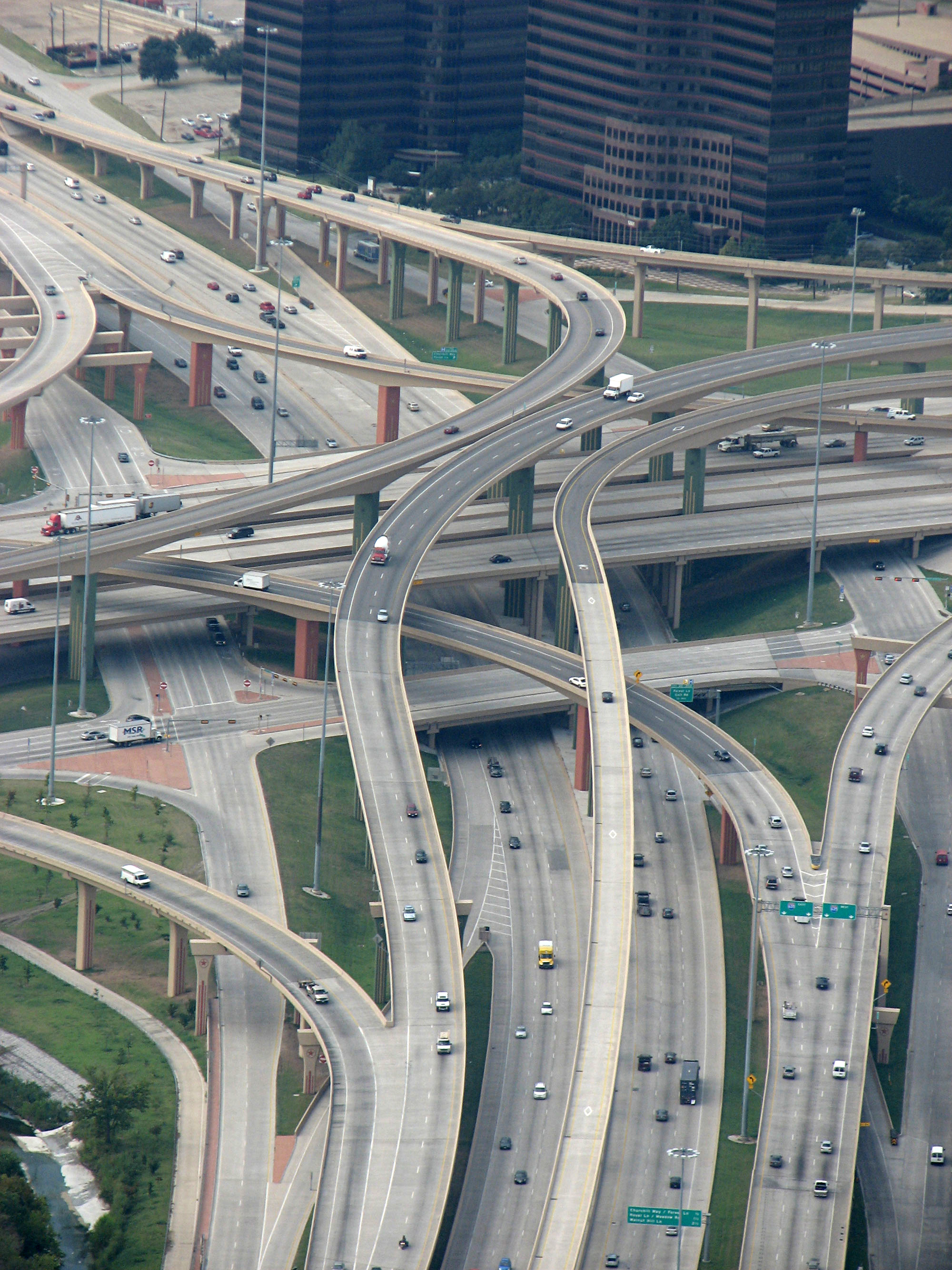}
\caption{High Five Interchange in Dallas, Texas. CC-BY image File:High Five.jpg by fatguyinalittlecoat from Wikimedia commons.}
\label{fig:HighFive}
\end{figure}

It is possible to remove the essential crossings of a road network by replacing each crossing by an artificial intersection point, one that exists in the graph but not in the actual road network that it represents. We call a planar network constructed in this way the \emph{planarization} of the road network. Indeed, the TIGER data set, commonly used for experiments on road networks, has been planarized in this way. If the edges in this data set are drawn as straight line segments,
instead of following the curves of the actual roads that the edges represent, some crossings may arise from the straightening, but the underlying graph of this data set is planar. For this reason,
our experiments use a different data set, the Urban Road Network Dataset, that has not already been planarized~\cite{urban}.

However, planarized networks cannot be used to obtain correct results for many road network computations. For instance, using the planarization of a road network in a shortest path routing algorithm would potentially create routes that turn from one road to another at the artificial intersections added in planarization, which do not form usable routes in the real-world network.
For this reason, it is problematic to use algorithms designed specifically to work in planar networks, such as the known linear-time planar graph shortest path algorithms~\cite{KleRaoRau-STOC-94}, on road network data. Instead, we must show that despite their non-planarities, road networks have the underlying properties necessary to support generalized versions of these efficient algorithms.


\section{Experiments}
In this section, we examine the sparsity of crossing graphs experimentally, on real-world road networks. We compute and analyze the crossing graphs for the $80$ of the most populated urban areas in the world. Our analysis uses the Urban Road Network Dataset~\cite{urban}.


\begin{table*}
\centering
\caption{Crossing Graphs (both essential and removable crossings)}
 \begin{tabular}{|c|c|c|c|c|c|c|c|c|} \hline
City & Roads & Crossings & Uncrossed roads & Degeneracy & Degree & Components & Trees & Non-trees \\ \hline
Abidjan & 46423 & 699 & 45565 & 3 & 11 & 214 & 193 & 21 \\ \hline
Ahmedabad & 18327 & 703 & 17567 & 3 & 64 & 149 & 122 & 27 \\ \hline
Ankara & 111054 & 2713 & 107943 & 3 & 16 & 769 & 643 & 126 \\ \hline
Atlanta & 381649 & 7505 & 372135 & 4 & 33 & 2737 & 2501 & 236 \\ \hline
Bangdung & 26154 & 537 & 25555 & 2 & 23 & 156 & 140 & 16 \\ \hline
Bangkok & 188069 & 9940 & 178225 & 5 & 38 & 1991 & 1637 & 354 \\ \hline
Barcelona & 296969 & 13218 & 281762 & 4 & 38 & 3792 & 3309 & 483 \\ \hline
Beijing & 110115 & 13491 & 98339 & 5 & 71 & 2440 & 1915 & 525 \\ \hline
BeloHorizonte & 94207 & 2797 & 91221 & 4 & 23 & 632 & 531 & 101 \\ \hline
Bengaluru & 200828 & 3102 & 197273 & 4 & 36 & 946 & 846 & 100 \\ \hline
Bogota & 199846 & 5929 & 194621 & 5 & 166 & 1070 & 960 & 110 \\ \hline
Boston & 470360 & 9324 & 458955 & 5 & 25 & 3125 & 2830 & 295 \\ \hline
BuenosAires & 435039 & 5344 & 429340 & 6 & 60 & 1284 & 1083 & 201 \\ \hline
Calcutta & 90283 & 991 & 88995 & 3 & 18 & 395 & 374 & 21 \\ \hline
Chengdu & 25847 & 3278 & 23076 & 4 & 100 & 530 & 412 & 118 \\ \hline
Chongqing & 27251 & 3899 & 23207 & 4 & 18 & 873 & 658 & 215 \\ \hline
Dalian & 18017 & 1169 & 16638 & 3 & 21 & 352 & 292 & 60 \\ \hline
Dallas & 548721 & 15287 & 532025 & 4 & 52 & 3882 & 3136 & 746 \\ \hline
Delhi & 76240 & 1853 & 74240 & 3 & 24 & 458 & 364 & 94 \\ \hline
Detroit & 390597 & 8163 & 381056 & 4 & 34 & 2374 & 2070 & 304 \\ \hline
Dhaka & 19922 & 477 & 19445 & 2 & 20 & 128 & 120 & 8 \\ \hline
Dongguan & 11076 & 1773 & 9557 & 4 & 28 & 279 & 183 & 96 \\ \hline
Fuzhou & 16093 & 1948 & 14496 & 4 & 23 & 330 & 247 & 83 \\ \hline
Guangzhou & 69410 & 9765 & 60238 & 4 & 26 & 1804 & 1234 & 570 \\ \hline
Hangzhou & 28270 & 3433 & 25238 & 4 & 26 & 623 & 459 & 164 \\ \hline
Harbin & 14805 & 1383 & 13465 & 3 & 38 & 271 & 195 & 76 \\ \hline
HoChiMinh & 110953 & 1627 & 109087 & 3 & 13 & 519 & 469 & 50 \\ \hline
Houston & 632289 & 13337 & 616864 & 4 & 28 & 3888 & 3255 & 633 \\ \hline
Hyderabad & 172822 & 2461 & 169948 & 3 & 166 & 737 & 701 & 36 \\ \hline
Istambul & 380466 & 12002 & 368090 & 4 & 56 & 2480 & 2059 & 421 \\ \hline
Jacarta & 105318 & 4811 & 100998 & 4 & 159 & 746 & 620 & 126 \\ \hline
Johannesburg & 267193 & 6715 & 259424 & 4 & 38 & 1957 & 1710 & 247 \\ \hline
Karachi & 55362 & 1091 & 54236 & 3 & 38 & 235 & 206 & 29 \\ \hline
Lahore & 41115 & 1506 & 39689 & 3 & 61 & 223 & 188 & 35 \\ \hline
London & 376437 & 9015 & 365645 & 4 & 23 & 2988 & 2757 & 231 \\ \hline
LosAngeles & 552690 & 17905 & 532472 & 5 & 50 & 4731 & 4140 & 591 \\ \hline
Madrid & 511910 & 23482 & 487090 & 5 & 49 & 5139 & 4349 & 790 \\ \hline
Manila & 328623 & 5518 & 321984 & 4 & 26 & 1786 & 1664 & 122 \\ \hline
Medellin & 33310 & 1118 & 32174 & 4 & 20 & 271 & 225 & 46 \\ \hline
Miami & 312082 & 6396 & 304945 & 5 & 41 & 1668 & 1387 & 281 \\ \hline
Milan & 262466 & 8512 & 252789 & 4 & 25 & 2536 & 2227 & 309 \\ \hline
Moscow & 940251 & 20469 & 916398 & 4 & 40 & 6647 & 6185 & 462 \\ \hline
Mumbai & 61299 & 2085 & 58976 & 3 & 20 & 580 & 514 & 66 \\ \hline
Nanjing & 29947 & 3224 & 27018 & 5 & 21 & 623 & 467 & 156 \\ \hline
Naples & 169589 & 7089 & 161590 & 4 & 49 & 2242 & 2006 & 236 \\ \hline
\end{tabular}
\label{table:newcrossing} 
\end{table*}


\begin{table*}
\centering
\caption{Crossing Graphs (essential crossings only)}
 \begin{tabular}{|c|c|c|c|c|c|c|c|c|} \hline
City & Roads & Crossings & Uncrossed roads & Degeneracy & Degree & Components & Trees & Non-trees \\ \hline
Abidjan & 46423 & 175 & 46218 & 2 & 5 & 52 & 41 & 11 \\ \hline
Ahmedabad & 18327 & 299 & 18056 & 3 & 10 & 46 & 24 & 22 \\ \hline
Ankara & 111054 & 1041 & 110053 & 3 & 9 & 233 & 144 & 89 \\ \hline
Atlanta & 381649 & 2081 & 379231 & 4 & 14 & 664 & 533 & 131 \\ \hline
Bangdung & 26154 & 175 & 25976 & 2 & 21 & 45 & 37 & 8 \\ \hline
Bangkok & 188069 & 6902 & 181482 & 5 & 35 & 1252 & 956 & 296 \\ \hline
Barcelona & 296969 & 5048 & 291548 & 4 & 36 & 1380 & 1118 & 262 \\ \hline
Beijing & 110115 & 10772 & 101255 & 5 & 71 & 1764 & 1314 & 450 \\ \hline
BeloHorizonte & 94207 & 1047 & 93155 & 3 & 11 & 241 & 187 & 54 \\ \hline
Bengaluru & 200828 & 1158 & 199700 & 3 & 35 & 249 & 190 & 59 \\ \hline
Bogota & 199846 & 3270 & 197517 & 5 & 18 & 422 & 357 & 65 \\ \hline
Boston & 470360 & 3668 & 466144 & 5 & 23 & 1098 & 916 & 182 \\ \hline
BuenosAires & 435039 & 3217 & 431874 & 6 & 60 & 715 & 559 & 156 \\ \hline
Calcutta & 90283 & 363 & 89894 & 2 & 15 & 85 & 70 & 15 \\ \hline
Chengdu & 25847 & 2385 & 23951 & 4 & 99 & 346 & 257 & 89 \\ \hline
Chongqing & 27251 & 2163 & 25020 & 4 & 11 & 522 & 368 & 154 \\ \hline
Dalian & 18017 & 757 & 17126 & 3 & 21 & 230 & 186 & 44 \\ \hline
Dallas & 548721 & 7694 & 541247 & 4 & 40 & 1553 & 966 & 587 \\ \hline
Delhi & 76240 & 836 & 75437 & 3 & 10 & 171 & 100 & 71 \\ \hline
Detroit & 390597 & 3123 & 387062 & 4 & 10 & 891 & 709 & 182 \\ \hline
Dhaka & 19922 & 247 & 19711 & 2 & 7 & 61 & 56 & 5 \\ \hline
Dongguan & 11076 & 1162 & 10133 & 4 & 26 & 171 & 101 & 70 \\ \hline
Fuzhou & 16093 & 1228 & 15134 & 3 & 14 & 194 & 130 & 64 \\ \hline
Guangzhou & 69410 & 6817 & 63105 & 4 & 25 & 1241 & 752 & 489 \\ \hline
Hangzhou & 28270 & 2462 & 26166 & 3 & 23 & 438 & 305 & 133 \\ \hline
Harbin & 14805 & 896 & 13973 & 3 & 38 & 164 & 113 & 51 \\ \hline
HoChiMinh & 110953 & 616 & 110325 & 3 & 12 & 149 & 116 & 33 \\ \hline
Houston & 632289 & 6769 & 625518 & 4 & 19 & 1431 & 894 & 537 \\ \hline
Hyderabad & 172822 & 998 & 171967 & 2 & 166 & 103 & 89 & 14 \\ \hline
Istambul & 380466 & 4391 & 376409 & 4 & 31 & 845 & 573 & 272 \\ \hline
Jacarta & 105318 & 2338 & 103588 & 4 & 158 & 247 & 171 & 76 \\ \hline
Johannesburg & 267193 & 1871 & 265257 & 3 & 38 & 509 & 379 & 130 \\ \hline
Karachi & 55362 & 460 & 54931 & 3 & 28 & 97 & 76 & 21 \\ \hline
Lahore & 41115 & 523 & 40632 & 3 & 57 & 71 & 49 & 22 \\ \hline
London & 376437 & 4946 & 370890 & 4 & 23 & 1497 & 1350 & 147 \\ \hline
LosAngeles & 552690 & 8941 & 543170 & 5 & 16 & 2095 & 1683 & 412 \\ \hline
Madrid & 511910 & 8734 & 503015 & 5 & 47 & 1981 & 1598 & 383 \\ \hline
Manila & 328623 & 2276 & 326260 & 4 & 19 & 577 & 505 & 72 \\ \hline
Medellin & 33310 & 635 & 32765 & 3 & 12 & 114 & 82 & 32 \\ \hline
Miami & 312082 & 2707 & 309459 & 5 & 13 & 529 & 294 & 235 \\ \hline
Milan & 262466 & 4297 & 258068 & 4 & 25 & 1103 & 887 & 216 \\ \hline
Moscow & 940251 & 9686 & 930967 & 4 & 37 & 2249 & 1947 & 302 \\ \hline
Mumbai & 61299 & 1218 & 60081 & 3 & 20 & 274 & 224 & 50 \\ \hline
Nanjing & 29947 & 2184 & 28039 & 5 & 19 & 391 & 266 & 125 \\ \hline
Naples & 169589 & 3682 & 165822 & 4 & 48 & 1024 & 889 & 135 \\ \hline
\end{tabular}
\label{table:npcrossing} 
\end{table*}

The Urban Road Network Dataset includes graphs with self-loops and parallel edges. We removed self-loops and parallel edges before processing the data. The data contains both essential and removable crossings.

To find the crossings in the Urban Road Network Dataset, we used a plane sweep algorithm for line segment intersection detection~\cite{BoiPre-SICOMP-00}, as implemented in the CGAL computational geometry library~\cite{CGAL}. Although theoretically-faster algorithms are known for the type of data considered here, in which the segments form a connected geometric graph with few crossings~\cite{eppstein2009linear}, CGAL's plane sweep is practical and usable, and the slight superlinearity of its time bound is not problematic for the problem sizes we tested.

Because a perfect identification of essential crossings would require the solution of the NP-hard problem of minimizing the number of crossings in graph embeddings~\cite{GarJoh-SJADM-83}, our experiments determined whether a crossing is essential or removable heuristically, by using the fact that the data associated with each road segment in the Urban Road Network Dataset indicates whether it is a bridge or tunnel. Our heuristic is that when a crossing occurs between two road segments neither of which is a bridge or a tunnel, then it is removable. However, the  Urban Road Network Dataset only includes this bridge and tunnel labeling for a subset of the cities that it covers. For this reason, we restricted our experiments to this subset.

We used the NetworkX Python package~\cite{NetworkX} to study the structure of the crossing graphs  we constructed.

\subsection{Hypothesis}

Based on our intuitions concerning bridges and tunnels in real-world road networks,
we expected the crossing graphs to include some vertices of moderate degree, but otherwise to be very sparse. For instance, we considered it to be possible that all of the connected components of the crossing graph would be trees.

\subsection{Results}

The results of our experiments can be seen in Table \ref{table:newcrossing} and \ref{table:npcrossing}.
We have given some of the key properties of the crossing graphs of road networks for $45$ out of $80$ cities in the table. The table columns labeled ``Roads'' and ``Crossings'' give the total number of nodes and edges, respectively, in each crossing graph; that is, the numbers of road segments and crossings in the original road network. The column giving the number of uncrossed roads lists the number of nodes in the crossing graph which have no incident edges; that is, the number of uncrossed road segments. These isolated nodes constitute the vast majority of crossing graph nodes. Table~\ref{table:newcrossing} captures both essential and removable crossings whereas Table~\ref{table:npcrossing} captures only essential crossings.

The columns labeled ``Degeneracy'' and ``Degree'' give the degeneracy and maximum degree, respectively, of the crossing graph, as defined in Section~\ref{sec:sparsedef}. The remaining columns detail the total number of nontrivial connected components, the number of components that are trees, and the number of  components that are not trees, respectively, in the crossing graph.

\subsection{Analysis}

The table show that indeed these graphs are sparse. More specifically, our hypothesis that the degeneracy would be significantly smaller than the maximum degree held up in the experiments.

Although it is not true that (as we hypothesized) all components of the crossing graphs are trees, most of them are. The remaining non-tree components all have low degeneracy (at most $6$).


\section{Theoretical Analysis of Graphs with Sparse Crossings}

Both our experimental results, and our consideration of nonplanar tunnels and interchanges forming essential crossings in (non-planarized) real road networks, motivate the following question. Suppose that, as the experiments appear to show, road networks have sparse crossing graphs. More precisely, suppose that their crossing graphs have bounded degeneracy, but not necessarily bounded degree. What does this imply about the graph-theoretic properties of road networks? Do they have bounded degeneracy? Polynomial expansion? Here we give positive answers to both of these questions.

First, let us formalize the notion of a graph with a sparse crossing graph.

\begin{definition}
\normalfont
We define a \emph{nice embedding} to be a mapping of the vertices of the graph to points in the plane, and the edges to curves, such that the following conditions are all met:
\begin{itemize}
\item Each edge is mapped to a Jordan arc (a non-self-intersecting curve) whose endpoints are the images of the endpoints of the edge.
\item If an edge and a vertex are disjoint in the graph, their images in the plane are disjoint.
\item If two edges are mapped to curves that intersect, then that intersection consists of a single point, and is either a shared endpoint of both edges or a point where their two curves cross.
\item No three edges have curves that all cross at the same point.
\end{itemize}
\end{definition}

Let $\mathscr{C}_d$ denote the family of embedded graphs with nice embeddings, such that the crossing graphs of these embeddings have degeneracy at most~$d$. 

\begin{definition}
We say that a graph $G$ is \emph{$d$-crossing-degenerate} if it belongs to $\mathscr{C}_d$.
\end{definition}

In order to apply the equivalence between separator theorems and polynomial expansion of Dvo{\v{r}}{\'a}k and Norin~\cite{DvoNor-SSS-15}, we need to verify that the class of graphs we care about is closed under subgraphs.

\begin{lemma}
\label{lem:subgraph-closure}
Every subgraph of a graph in $\mathscr{C}_d$ also belongs to $\mathscr{C}_d$.
\end{lemma}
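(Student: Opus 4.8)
The plan is to take the given nice embedding of $G$ and simply restrict it to the subgraph, then argue that restriction can only delete vertices from the crossing graph and never create new crossings. Concretely, let $G\in\mathscr{C}_d$ with nice embedding $\phi$ whose crossing graph has degeneracy at most $d$, and let $H$ be a subgraph of $G$. I would define an embedding $\phi_H$ of $H$ by assigning to each vertex and each edge of $H$ exactly the point or curve that $\phi$ assigns to it in $G$.

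First I would verify that $\phi_H$ is again a nice embedding. Each of the four defining conditions is a statement about the image of a single edge, or about the images of a pair or triple of edges and vertices; since $\phi_H$ reuses the very same points and curves as $\phi$ on the elements surviving in $H$, and deleting elements cannot cause any of these local conditions to fail, all four are inherited. The only point requiring a word of care is the edge--vertex disjointness condition: if an edge $e$ and a vertex $v$ are disjoint in $H$, then they are present and disjoint in $G$, so $\phi(e)$ and $\phi(v)$ are disjoint, and hence so are their images under $\phi_H$.

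The key observation is that the crossing graph of $(H,\phi_H)$ is the subgraph of the crossing graph of $(G,\phi)$ induced on the vertices corresponding to $E(H)$. Its vertices are the edges of $H$, a subset of $E(G)$; and two such edges are adjacent in the crossing graph of $H$ exactly when their curves cross, which is precisely the adjacency condition in the crossing graph of $G$, since the curves are unchanged. Deletion neither creates nor destroys a crossing between two surviving edges, so adjacency among the surviving crossing-graph vertices is fully determined by $(G,\phi)$.

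Finally I would invoke the monotonicity of degeneracy under subgraphs, which is immediate from its definition as the maximum over subgraphs of the minimum degree: every subgraph of the crossing graph of $G$ has degeneracy at most that of the crossing graph of $G$, namely at most $d$. In particular the crossing graph of $H$ has degeneracy at most $d$, so $H\in\mathscr{C}_d$. I do not expect any genuine obstacle here; this is a routine closure argument, and the only thing that could conceivably go wrong---restriction introducing a new crossing---cannot occur, since restriction only removes curves. The single step meriting explicit attention is confirming that no nice-embedding condition is broken by the deletions, which is the edge--vertex disjointness check noted above.
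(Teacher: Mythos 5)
Your proposal is correct and follows essentially the same route as the paper's proof: restrict the embedding of $G$ to $H$, observe that the crossing graph of the restricted embedding is an induced subgraph of the crossing graph of $G$, and conclude via monotonicity of degeneracy under taking subgraphs. The only difference is that you explicitly verify the nice-embedding conditions survive deletion, a detail the paper leaves implicit.
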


\begin{proof}
Let $G$ be in $\mathscr{C}_d$ and let $H$ be a subgraph of $G$. Embed $H$ by deleting the edges and vertices of $G\setminus H$ from the embedding of $G$. Then the crossing graph of the resulting embedding of $H$ is an induced subgraph of the crossing graph of the embedding of $G$.
Since taking induced subgraphs cannot increase the degeneracy, the degeneracy of the crossing graph of $H$ is at most~$d$. Therefore, $H$ belongs to $\mathscr{C}_d$.
\end{proof}

Next, we examine the number of crossings that a graph in this family can have. As we show below, a linear bound on the crossing number follows directly from the assumption of low crossing graph degeneracy.

\begin{lemma}
\label{lem:few-crossings}
For the graphs in $\mathscr{C}_d$, a graph with $m$ edges has at most $dm$ crossings.
\end{lemma}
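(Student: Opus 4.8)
The plan is to reduce the statement to a standard counting bound on the number of edges in a graph of bounded degeneracy. First I would observe that the crossings of the embedded graph are in bijection with the edges of its crossing graph. The definition of a nice embedding guarantees that any two edges whose curves intersect do so in a single point that is a crossing (not a shared endpoint), and that no three edge curves pass through a common point. Hence every crossing contributes exactly one edge to the crossing graph, and distinct crossings give distinct edges, so the number of crossings equals the number of edges of the crossing graph. Since the crossing graph has one vertex per edge of $G$, it has exactly $m$ vertices.

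Next I would invoke the characterization of degeneracy already recalled in the preliminaries: a graph of degeneracy at most $d$ admits a vertex ordering in which every vertex has at most $d$ neighbors later in the order. Orienting each edge from its earlier to its later endpoint and charging each edge to its earlier endpoint, every vertex is charged for at most $d$ edges, so a graph of degeneracy at most $d$ on $N$ vertices has at most $dN$ edges.

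Applying this elementary inequality with $N = m$ to the crossing graph, whose degeneracy is at most $d$ by the definition of $\mathscr{C}_d$, gives at most $dm$ edges in the crossing graph, and therefore at most $dm$ crossings in $G$.

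I do not expect a substantive obstacle here. The only point requiring care is the bijection between crossings and crossing-graph edges, which is precisely what the third and fourth conditions of the nice-embedding definition (a single intersection point per crossing pair, and no three curves crossing at one point) are designed to ensure. The remaining content is the degeneracy-to-edge-count bound, which follows immediately from the ordering property stated earlier.
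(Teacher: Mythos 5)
Your proof is correct and takes essentially the same approach as the paper: both charge at most $d$ crossings to each edge of $G$ via the degeneracy of the crossing graph, the paper by greedily removing an edge crossed at most $d$ times (which is exactly the peeling process that produces your ordering), and you by reading off the standard ``$d$-degenerate graph on $N$ vertices has at most $dN$ edges'' bound. Your explicit note that crossings are in bijection with crossing-graph edges (using the single-intersection and no-triple-crossing conditions of a nice embedding) is a point the paper leaves implicit, but the underlying argument is the same.
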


\begin{proof}
We can reduce any graph in $\mathscr{C}_d$ to one with no edges (and no crossings) by repeatedly removing an edge that is crossed by at most $d$ other edges, using the assumption that the crossing graph is $d$-degenerate and therefore that there exists an edge that is crossed at most $d$ times. This process eliminates at most $d$ crossings per step and takes $m$ steps, so there are at most $dm$ crossings in the given graph.
\end{proof}

In contrast, graphs with many edges are known to have many crossings per edge. We use the following well-known \emph{crossing number inequality} of Ajtai, Chv\'atal, Newborn, Szemer\'edi, and Leighton~\cite{AjtChvNew-TPC-82,Lei-VLSI-83,PacRadTar-DCG-06}:

\begin{lemma}
\label{lem:crossing-inequality}
Let $G$ be an embedded graph with $n$ vertices and $m$ edges, with $m\ge 4n$. Then $G$ has $\Omega(m^3/n^2)$ crossings.
\end{lemma}

This allows us to show that the $d$-crossing-degenerate graphs are sparse.

\begin{lemma}
\label{lem:sparse}
Every $n$-vertex embedded graph $G$ in $\mathscr{C}_d$ has $O(n\sqrt d)$ edges.
\end{lemma}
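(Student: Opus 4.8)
The plan is to combine the upper bound on the number of crossings coming from Lemma~\ref{lem:few-crossings} with the lower bound coming from the crossing number inequality (Lemma~\ref{lem:crossing-inequality}), and then simply read off the resulting constraint on $m$. Since Lemma~\ref{lem:crossing-inequality} only applies once $m\ge 4n$, I would organize the argument as a case split on whether $G$ is sparse or dense.

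First I would dispose of the sparse case. If $m<4n$, then $m=O(n)$; and since a graph in $\mathscr{C}_0$ is crossing-free and hence planar (so that $m\le 3n-6=O(n)$ already), we may assume $d\ge 1$, whence $O(n)\subseteq O(n\sqrt d)$ and the bound holds trivially. This leaves the dense case $m\ge 4n$. Here I apply both lemmas to the given nice embedding of $G$. By Lemma~\ref{lem:crossing-inequality}, the embedding has at least $c_1 m^3/n^2$ crossings for some absolute constant $c_1>0$; by Lemma~\ref{lem:few-crossings}, it has at most $dm$ crossings. Chaining these two bounds gives $c_1 m^3/n^2 \le dm$, and dividing through by $m$ and rearranging yields $m^2 \le (d/c_1)\,n^2$, that is, $m \le c_2\, n\sqrt d$ with $c_2 = 1/\sqrt{c_1}$. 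Combining the two cases establishes the claimed bound $m = O(n\sqrt d)$.

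The computation itself is essentially a single line once both lemmas are in hand, so there is no substantial obstacle; the only points requiring care are bookkeeping. One is the case split on $m$ versus $4n$, forced by the hypothesis of Lemma~\ref{lem:crossing-inequality}. The other, which I would flag explicitly, is that the $\Omega(m^3/n^2)$ lower bound must apply to the crossing count of the \emph{specific} nice embedding rather than merely to the crossing number of $G$: this is justified because the number of crossings in any legitimate drawing is at least the crossing number, and the nice-embedding conditions (each pair of edges meets in at most one transverse crossing, and no three edges cross at a common point) guarantee that the crossing count is exactly the number of crossing pairs, so the inequality transfers directly to the embedding at hand.
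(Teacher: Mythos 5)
Your proof is correct and follows essentially the same route as the paper's: both play the upper bound of Lemma~\ref{lem:few-crossings} (at most $dm$ crossings, i.e.\ at most $d$ per edge on average) against the lower bound of Lemma~\ref{lem:crossing-inequality} to force $m=O(n\sqrt d)$. If anything, your version is slightly more careful, since your explicit case split on $m\ge 4n$ addresses the hypothesis of Lemma~\ref{lem:crossing-inequality}, which the paper's proof invokes without comment.
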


\begin{proof}
Let the number of edges in $G$ be $\gamma n\sqrt d$, for some parameter $\gamma$.
Then by Lemma~\ref{lem:crossing-inequality} the number of crossings is $\Omega(n(\gamma\sqrt d)^3)$ and the number of crossings per edge is $\Omega((\gamma\sqrt d)^2)=\Omega(\gamma^2 d)$. However, by Lemma~\ref{lem:few-crossings} the number of crossings per edge is also at most~$d$. For these two things to both be true, it must be the case that $\gamma=O(1)$, so the number of edges in $G$ is $O(n\sqrt d)$.
\end{proof}

\begin{definition}
\normalfont
Given a nicely embedded graph $G$, we define the \emph{planarization} $P(G)$ to be the planar graph that has a vertex for each vertex or crossing point of $G$, and an edge for each maximal segment of an edge curve of $G$ that does not contain a crossing point.
\end{definition}

It follows from our previous lemmas that $d$-crossing-degen\-er\-ate graphs have small planarizations.

\begin{lemma}
\label{lem:small-planarization}
For the graphs in $\mathscr{C}_d$, every $n$-vertex graph $G$ has a planarization with $O(nd^{3/2})$ vertices and edges.
\end{lemma}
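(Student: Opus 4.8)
The plan is to prove this by a direct counting argument, chaining together the two previous lemmas. First I would fix an $n$-vertex graph $G \in \mathscr{C}_d$ with $m$ edges, and recall from Lemma~\ref{lem:sparse} that $m = O(n\sqrt d)$. Then, applying Lemma~\ref{lem:few-crossings}, the total number of crossings in the nice embedding is at most $dm = d \cdot O(n\sqrt d) = O(nd^{3/2})$. This single bound on the number of crossings is the quantity that will drive both halves of the estimate.

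Next I would count the vertices of $P(G)$. By definition, $P(G)$ has one vertex for each original vertex of $G$ and one for each crossing point. There are $n$ of the former and $O(nd^{3/2})$ of the latter, so the vertex count is $n + O(nd^{3/2}) = O(nd^{3/2})$, as desired.

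For the edges of $P(G)$, I would use the observation that each edge curve of $G$ is cut into maximal crossing-free segments by exactly the crossing points lying on it: an edge carrying $k$ crossings contributes $k+1$ segments. Summing over all edges, the total number of segments is $m + \sum_e (\text{crossings on } e)$. The key accounting step is that each crossing point lies on precisely two edge curves (using the niceness condition that no three edges cross at a common point), so $\sum_e (\text{crossings on } e)$ equals twice the number of crossings. Hence the edge count of $P(G)$ is $m + 2 \cdot O(nd^{3/2}) = O(nd^{3/2})$.

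This argument is almost entirely routine; the only point that needs care is the edge-segment bookkeeping in the last step, where one must invoke the ``no three edges cross at the same point'' clause of the nice-embedding definition to guarantee that each crossing is charged to exactly two edges (rather than possibly more). With that in hand, the $O(nd^{3/2})$ bounds on both vertices and edges follow immediately, completing the proof.
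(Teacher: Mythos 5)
Your proof is correct and takes essentially the same route as the paper, which simply notes that the bound follows immediately from Lemma~\ref{lem:sparse} (at most $O(n\sqrt d)$ edges) and Lemma~\ref{lem:few-crossings} (at most $d$ crossings per edge on average); your version just makes the vertex and edge counting of $P(G)$ explicit, including the factor-of-two bookkeeping the paper leaves implicit.
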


\begin{proof}
This follows immediately from the already-proven facts that $G$ itself has $O(n\sqrt d)$ edges (Lemma~\ref{lem:sparse}) and an average of at most $d$ crossings per edge (Lemma~\ref{lem:few-crossings}).
\end{proof}

Using these lemmas, we can prove our main result, that these graphs have sublinear separators.

\begin{theorem}
The graphs in $\mathscr{C}_d$ have sublinear separators and polynomial expansion.
A sublinear separator hierarchy for these graphs can be constructed from their planarizations in linear time.
\end{theorem}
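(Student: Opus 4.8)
The plan is to transfer separators from the planarization $P(G)$ back to $G$, exploiting the small size of $P(G)$ guaranteed by Lemma~\ref{lem:small-planarization}. First I would apply the planar separator theorem~\cite{LipTar-SJAM-79} to $P(G)$, but in its \emph{weighted} form: assign weight $1$ to each vertex of $P(G)$ that is an original vertex of $G$, and weight $0$ to each vertex that is a crossing point. Since $P(G)$ is planar with $O(nd^{3/2})$ vertices, this yields a separator $S$ of $P(G)$ of size $O(\sqrt{nd^{3/2}}) = O(\sqrt{n}\,d^{3/4})$ whose removal leaves components each containing at most $2n/3$ of the original vertices of $G$.

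Next I would convert $S$ into a separator $S'$ of $G$ itself. The vertices of $S$ are of two types. Each original vertex of $G$ lying in $S$ I keep in $S'$. Each crossing point $c\in S$ is the crossing of two edges $e_1,e_2$ of $G$; for such $c$ I add to $S'$ one endpoint of $e_1$ and one endpoint of $e_2$. This at most doubles the size, so $|S'| = O(\sqrt{n}\,d^{3/4})$, which is sublinear for fixed $d$. The key verification is that each connected component of $G\setminus S'$ lies within a single component of $P(G)\setminus S$: if $u$ and $v$ are adjacent in $G\setminus S'$ via an edge $e$, then the subdivided image of $e$ in $P(G)$ cannot pass through any crossing point of $S$, since any such crossing point would have forced an endpoint of $e$ — that is, $u$ or $v$ — into $S'$, contradicting $u,v\notin S'$. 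Hence this image is a path in $P(G)\setminus S$ joining $u$ and $v$ (note $u,v\notin S$ because $S\cap V(G)\subseteq S'$), and every component of $G\setminus S'$ therefore inherits the weight bound of at most $2n/3$ original vertices. This establishes that $\mathscr{C}_d$ has sublinear separators, with separation constant $c=2/3$ and exponent $e=1/2$, the $d^{3/4}$ factor being absorbed into the leading constant for a fixed family.

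Polynomial expansion then follows immediately: $\mathscr{C}_d$ is closed under subgraphs by Lemma~\ref{lem:subgraph-closure}, so the sublinear separator bound applies to every subgraph, and the equivalence of Dvo{\v{r}}{\'a}k and Norin~\cite{DvoNor-SSS-15} converts sublinear separators for a subgraph-closed family into polynomial expansion. For the linear-time claim, I would build a separator hierarchy for $P(G)$ directly using Goodrich's linear-time planar separator hierarchy algorithm~\cite{Goo-JCSS-95}, run with the same $0/1$ weighting so that the recursion balances the count of original vertices, and then apply the crossing-point-to-endpoint conversion at every node of the hierarchy. Since $P(G)$ has $O(n)$ vertices for constant $d$, the whole construction runs in linear time.

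The main obstacle I anticipate is the balance condition in the conversion: the planar separator theorem naturally bounds component size in terms of $|V(P(G))|$, which exceeds $n$, so it is essential to use the weighted variant to control the number of original graph vertices per component rather than the number of planarization vertices. The companion point requiring care is that components of $G\setminus S'$ must not merge across regions separated in $P(G)\setminus S$, which the endpoint-insertion trick handles above; propagating this conversion through every level of the hierarchy while preserving the overall linear running time is the last detail to confirm.
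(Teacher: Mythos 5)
Your proposal is correct and follows essentially the same route as the paper: apply the planar separator theorem~\cite{LipTar-SJAM-79} to the planarization $P(G)$ (which is small by Lemma~\ref{lem:small-planarization}), lift the separator back to $G$ by replacing each crossing vertex with endpoints of the two crossed edges, then combine subgraph closure (Lemma~\ref{lem:subgraph-closure}) with the Dvo\v{r}\'ak--Norin equivalence~\cite{DvoNor-SSS-15} for polynomial expansion and Goodrich's algorithm~\cite{Goo-JCSS-95} for the linear-time hierarchy. The only notable difference is how the balance condition is handled --- the paper applies the unweighted separator theorem recursively a bounded number of times until components hold at most a constant fraction of $n$, whereas you invoke the weighted ($0/1$-weight) form of Lipton--Tarjan to balance original vertices directly, which is a clean and valid alternative (and your two-endpoints-per-crossing conversion, versus the paper's four, also suffices by your adjacency argument).
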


\begin{proof}
To prove the existence of sublinear separators, we apply the planar separator theorem~\cite{LipTar-SJAM-79} to the planarization $P(G)$ of a graph $G$ in $\mathscr{C}_d$. By Lemma~\ref{lem:small-planarization} $P(G)$ is larger than $G$ by at most a factor depending only on~$d$, so (treating $d$ as a constant for the purposes of $O$-notation) $P(G)$ has separators of size $O(\sqrt n)$. One application of the separator theorem may produce components of $P(G)$ that are larger than the number $n$ of vertices in $G$, but recursively applying the separator theorem a bounded number of times will produce components that are smaller than $n$ by a constant factor. The resulting separators have vertices in $P(G)$, corresponding to crossings in $G$; they can be transformed into separators in $G$ itself by replacing each crossing vertex in the separator by the set of four endpoints of the corresponding crossing edges in $G$. This replacement only increases the separator size by a constant factor.

Since this method uses only planarization and planar separators, we may apply the linear time method for constructing planar separator hierarchies to $P(G)$~\cite{Goo-JCSS-95}, to get a separator hierarchy to $G$ as well.

The fact that these graphs have polynomial expansion follows from the existence of sublinear separators, and from the fact that $\mathscr{C}_d$ is closed under subgraphs (Lemma~\ref{lem:subgraph-closure}).
\end{proof}

By applying the method of \cite{KleRaoRau-STOC-94}, we obtain:

\begin{corollary}
If we are given the planarization of a graph $G$ in $\mathscr{C}_d$, we can compute shortest paths in $G$ itself in linear time.
\end{corollary}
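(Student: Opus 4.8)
The plan is to run a separator-based shortest-path computation directly on $G$, using the given planarization $P(G)$ only to supply the recursive separator decomposition that drives it. The reason we cannot simply compute distances in $P(G)$ and read them off is exactly the one noted earlier for planarized networks: a shortest path in $P(G)$ is free to pass through a crossing vertex, turning from one crossing road onto the other, and such a turn does not correspond to a legal route in the real network. In $G$ itself this difficulty disappears, because a crossing is not a vertex of $G$ and the two crossing edges are not incident there; any walk in $G$ therefore stays on a single road as it passes a crossing, so shortest paths computed in $G$ are automatically the correct routes. Thus the whole task reduces to carrying out a shortest-path computation on the (nonplanar) graph $G$ in linear time.

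First I would invoke the preceding theorem to obtain, in linear time, a sublinear separator hierarchy for $G$ from $P(G)$: each planar separator of $P(G)$ is mapped back to a separator of $G$ by replacing every crossing vertex it contains with the four endpoints of the two crossing edges, as in the proof of that theorem. This yields a recursive decomposition of $G$ into regions of geometrically decreasing size, each with a sublinear number of boundary vertices, which is precisely the structure that the linear-time planar shortest-path algorithm of Klein et al.~\cite{KleRaoRau-STOC-94} consumes. I would then feed this decomposition of $G$ into their divide-and-conquer relaxation scheme, computing boundary distances region by region and combining them up the hierarchy.

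The hard part is to check that the linear-time analysis of \cite{KleRaoRau-STOC-94}, which is stated for planar graphs, still goes through on the nearly-planar graph $G$. The key point is that their running-time bound is driven by the recursive separator decomposition -- the number of regions, their sizes, and the number of boundary vertices of each region -- rather than by planarity itself, and we have supplied exactly such a decomposition for $G$. What remains to be verified is that $G$ does not behave too differently from a planar graph within each region: by Lemma~\ref{lem:sparse} it has only $O(n\sqrt d)$ edges, and by Lemma~\ref{lem:few-crossings} every edge participates in at most $d$ crossings, so each region of the hierarchy differs from a planar region by only $O(1)$ crossings per edge. I expect that tracking these crossings through the relaxation scheme inflates every quantity in the amortized analysis by at most a factor depending on $d$, which is a constant; since $P(G)$ is itself larger than $G$ by only a factor depending on $d$ (Lemma~\ref{lem:small-planarization}), the overall running time remains linear. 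Making this transfer rigorous -- rather than simply citing the planar algorithm as a black box -- is the main obstacle, and it is where the bounded degeneracy of the crossing graph does the real work.
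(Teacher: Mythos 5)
Your proposal is correct and matches the paper's approach: the paper proves this corollary simply by citing the method of Klein et al.~\cite{KleRaoRau-STOC-94} applied on top of the theorem's linear-time separator hierarchy for $G$ (derived from $P(G)$), which is exactly the structure you build and feed into their divide-and-conquer scheme. Your additional observations---that the computation must run on $G$ rather than $P(G)$ to avoid illegal turns at artificial crossing vertices, and that the running-time analysis depends on the decomposition rather than on planarity itself---are correct elaborations of what the paper leaves implicit.
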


In conjunction with known fast algorithms for finding planarizations~\cite{eppstein2009linear}, this leads to a linear-time algorithm for shortest paths whenever the number of crossings is sufficiently smaller than the overall number of road segments (as it was in our experiments).

\section{Conclusions}

We have performed a computational study of the removable crossings in large-scale planarized road network data. Our study shows that these crossings form a crossing graph that has high degree vertices (up to degree 166), but that most connected components of the crossing graph are trees and that the few remaining components have maximum degeneracy six. 


Based on our study, we developed a model of nearly-planar graph, the $d$-crossing-degenerate graphs, consisting of the graphs that can be embedded with $d$-degenerate crossing graphs. We showed that this family of graphs is closed under the operation of taking subgraphs. In addition, for constant values of~$d$, these graphs have a linear number of crossings, a linear number of edges, and separators of size proportional to the square root of the number of vertices. In addition, a separator hierarchy for these graphs can be constructed in linear time, and applied in separator-based divide and conquer algorithms for shortest paths and other computational problems on road networks.



\subsection*{Acknowledgments}
This work was supported by NSF grants CCF-1618301 and CCF-1616248.

\bibliographystyle{ACM-Reference-Format}
\bibliography{refs}

\end{document}